\definecolor{cof}{RGB}{219,144,71}
\definecolor{pur}{RGB}{186,146,162}
\definecolor{greeo}{RGB}{91,173,69}
\definecolor{greet}{RGB}{52,111,72}
\newcommand*{\rom}[1]{\expandafter\@slowromancap\romannumeral #1@}
\newcommand{\Mod}[1]{\ (\mathrm{mod}\ #1)}
 \newtheorem{lem}{Lemma}
  \newtheorem{thm}{Theorem}
\begin{document}
%
\title{Distributed Control and  Quality-of-Service in Multihop Wireless Networks}

\author{\IEEEauthorblockN{Ashok Krishnan K.S. and Vinod Sharma}
\IEEEauthorblockA{Dept. of ECE, Indian Institute of Science, Bangalore, India\\
Email: \{ashok, vinod\}@ece.iisc.ernet.in
}}

\maketitle

\begin{abstract}
Control of wireless  multihop networks, while simultaneously meeting end-to-end mean delay requirements of different flows is a challenging problem. Additionally, distributed computation of control parameters adds to the complexity. Using the notion of discrete review used in fluid control of networks, a distributed algorithm is proposed for control of  multihop wireless networks with interference constraints. The algorithm meets end-to-end mean delay requirements  by solving an optimization problem at review instants. The optimization incorporates delay requirements as weights in the function being maximized. The weights are dynamic and vary  depending on queue length information. The optimization is done in a distributed manner using an incremental gradient ascent algorithm. The stability of the network under the proposed policy is analytically studied and the policy is shown to be throughput optimal.
\end{abstract}

%
\IEEEpeerreviewmaketitle

\section{Introduction and Literature Review}
A multihop wireless network consists of nodes transmitting and receiving information over the wireless medium, with data from a source node having to pass through multiple hops before it can reach its destination. The control of wireless networks, involving scheduling, routing and power control, is a complex and challenging problem. Applications often require distributed control as opposed to centralized control. Distributed control algorithms may not always match centralized algorithms in terms of performance. However, they offer ease of implementation from a practical perspective in scenarios where it may not be feasible to collect information from all over the network before arriving at a control decision.\\
\indent Different flows in a network, arising from different applications, may ask for Quality-of-Service (QoS). QoS may vary depending on the nature of the application. Some applications require an end-to-end  mean delay guarantee on the packets being transmitted. Some others, such as a live streaming video, may require all packets to satisfy a \emph{hard} delay requirement. In some cases, the QoS constraint is a bandwidth requirement for the user. Services involving VoIP (Voice over IP) are sensitive to delay and delay variability in the network, and require preferential treatment over other packets \cite{markopoulou2002assessment}. Another service that requires QoS is remote health-care, which involves collection of data about a patient from a remote location and transmitting it elsewhere to be analysed \cite{shah2016remote}. Such applications in the context of the Internet of Things (IoT) \cite{atzori2010internet}  will require the coming together of different kinds of traffic with various QoS requirements, and with different levels of sensitivity \cite{awan2014modelling}.\\
\indent While directly solving problems that involve QoS requirements may not be straightforward, one can look for appropriate asymptotic solutions. One approach is to  study the network in the large queue length regime, and translate mean delay requirements of flows into \emph{effective bandwidth} and \emph{effective delay} as given by large deviations theory, and formulate these as physical layer requirements \cite{she2016energy}. In the case of multihop networks, however, owing to the complex coupling between queues, such a formulation is not easy to obtain \cite{LauSurvey}.\\
\indent Backpressure based methods are common in network control. These are connected to control based on Lyapunov Optimization \cite{GeorgBook}.  Backpressure based algorithms may not provide good delay performance, especially in lightly loaded conditions \cite{Cui}, \cite{sharma2007opportunistic}. Techniques based on Markov Decision processes (MDPs) are also popular \cite{Kumar}. In \cite{SatWiOPt}, \cite{SatVS}, the problem of minimizing power while providing mean  and hard delay guarantees is studied. However the algorithm requires knowledge of system statistics and is not throughput optimal.\\
\indent Fluid limits \cite{Meyn08} are an asymptotic technique used to study networks and obtain control solutions. The network parameters and variables, under a suitable scaling, are shown to converge to deterministic processes. This is called the \emph{fluid limit}. The stability of the fluid limit has a direct bearing upon the stability of the original stochastic system \cite{dai1995positive},  \cite{andrews2004scheduling}. The technique of discrete review  is used in \cite{Marg00}. Here, the network is reviewed at certain time instants, and control decisions are taken till the next time instant using the state information observed. In \cite{shroff13} the authors use fluid limit based techniques to establish the stability of a per-queue based scheduling algorithm. A robust fluid model, obtained by adding stochastic variability to the conventional fluid model, is discussed in \cite{Bert15}. Another algorithm using per hop queue length information, with a low complexity approximation that stabilizes a fraction of the capacity region is given in \cite{srikant}. However, the algorithm does not address delay QoS.\\
\indent Our main contributions in this paper are summarised below.
\begin{itemize}[noitemsep,topsep=0pt]
	\item We propose an algorithm that solves a weighted optimization in order to address mean delay requirements of different flows. The weights are time varying and state dependent, as opposed to fixed weight schemes. This assigns dynamic priorities to different flows.
	\item Our policy uses the technique of discrete review, which involves taking decisions on the network control at certain time instants, thus reducing the overall control overhead as opposed to schemes which require computations in every slot, such as in \cite{stai2016performance}. Discrete review schemes have been used in queueing networks \cite{Marg00}; however, the implementation is centralized and they do not consider delay deadlines. The use of discrete review separates our policy from works such as \cite{shroff13} or \cite{ashok2016distributed} which involve decision making in each slot. The policies in \cite{shroff13}, \cite{srikant} are throughput optimal but do not provide other QoS.
	\item Iterative gradient ascent is used to  solve the optimization problem in a distributed manner, similar to what is done in \cite{ashokFluid1}. This can be implemented easily in a cyclic manner, with message passing between the nodes after each step. The gradient calculation requires only local information, and the projection step requires knowledge of links with which a particular link interferes.
	\item The algorithm works based on queue length information, which acts as a proxy for delay. Thus it differs from \cite{ashokFluid1} which uses delay information to obtain delay guarantees, and thus has a different function being maximized. In addition, this algorithm is analysed extensively theoretically and is shown to be throughput optimal. Simultaneously, it also has provisions for mean delay QoS.
\end{itemize} 
The rest of this paper is organized as follows. In Section II, we describe the system model,
 and formulate an optimization problem to address our requirements. In Section III, we describe the algorithm and its distributed implementation in detail. In Section IV, we obtain the fluid limit of the system under our algorithm, and show its throughput optimality. In Section V we detail the simulation results, followed by conclusions in Section VI.
\section{System Model and Problem Formulation}
 We consider a multihop wireless network (Fig. \ref{fig1}). The network is a directed graph $G=(V,E)$ with $V=\{1,2,..,N\}$ being the set of vertices and $E$, the set of links on $V$. The system evolves in discrete time denoted by $t\in \{0,1,2,...\}$. We have directional links, with  link $(i,j)$ from node $i$ to node $j$ having a time varying channel gain $\gamma_{ij}(t)$ at time $t$. At each node $i$, $A_i^f(t)$ denotes the cumulative process of exogenous arrival of  packets destined to node $f$, upto time $t$. The packets arrive as an  i.i.d sequence across slots,with mean arrival rate  $\lambda_i^f$. All traffic in the network with the same destination $f$ is called  \emph{flow} $f$; the set of all flows is denoted by $F$. Each flow has a fixed route to follow to its destination. At each node there are queues, with $Q_i^f(t)$ denoting the queue length at node $i$ corresponding to flow $f \in F$ at time $t$. The queues evolve as
\begin{align}
Q_i^f(t)= Q_i^f(0)+A_i^f(t)+\sum_{k\neq i}S_{ki}^f(t)-\sum_{j \neq i}S_{ij}^f(t), \label{actualQueue}
\end{align}
where $S_{ij}^f(t)$ denotes the cumulative number of packets of flow $f$ that are transmitted from node $i$ to node $j$ till time $t$. The vector of queues at time $t$ is denoted by $Q(t)$. Similarly we have the arrival vector $A(t)$ and the service vector $S(t)$. We will assume that the links are sorted into $M$ \emph{interference sets}\ $I_1, I_2, \dotsc, I_M$. At any time, only one link from an interference set can be active. A link may belong to multiple interference sets. In this work we will assume that any two links which share a common node will fall in the same interference set.  We assume that each node transmits at unit power. Then, the rate of transmission between node $i$ and node $j$ can be written as $\mu_{ij}(I(t),\gamma(t))$ where $\mu$ is some achievable rate function, $\gamma (t)=\{\gamma_{ij}(t)\}_{i,j}$ and $I(t)$ is the schedule at time $t$.\\ 
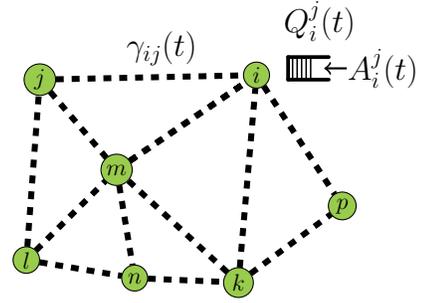
\begin{figure}
	\centering
	\setlength{\unitlength}{1cm}
	\thicklines
	\begin{tikzpicture}[scale=0.6, transform shape]		
	\node[draw,shape=circle, fill={rgb:orange,1;yellow,0;pink,2;green,2}, scale=0.6, transform shape] (v1) at (4.7,0.5) {\Huge $k$};
	\node[draw,shape=circle, fill={rgb:orange,1;yellow,0;pink,2;green,2}, scale=0.6, transform shape] (v2) at (2.4,0.6) {\Huge$n$};
	\node[draw,shape=circle, fill={rgb:orange,1;yellow,0;pink,2;green,2}, scale=0.6, transform shape] (v3) at (2.0,3.0) {\Huge$m$};
	\node[draw,shape=circle, fill={rgb:orange,1;yellow,0;pink,2;green,2}, scale=0.6, transform shape] (v4) at (0.3,5) {\Huge$j$};
	\node[draw,shape=circle, fill={rgb:orange,1;yellow,0;pink,2;green,2}, scale=0.6, transform shape] (v5) at (5.1,5.1) {\Huge$i$};
	\node[draw,shape=circle, fill={rgb:orange,1;yellow,0;pink,2;green,2}, scale=0.6, transform shape] (v6) at (7,2.2) {\Huge$p$};
	\node[draw,shape=circle, fill={rgb:orange,1;yellow,0;pink,2;green,2}, scale=0.6, transform shape] (v10) at (0,1.0) {\Huge$l$};
	\node (v7) at (7.9,5.25) {\huge $A_i^j(t)$};
	\node (v8) at (6.5,6.3) {\huge $Q_i^j(t)$};
	\node (v9) at (3,5.7) {\huge $\gamma_{ij}(t)$};			
	\draw[line width=0.8mm, dashed] (v2) -- (v1)
	(v4) -- (v5)			
	(v3) -- (v5)
	(v3) -- (v2)
	(v1) -- (v5)
	(v1) -- (v3)
	(v4) -- (v3)
	(v3) -- (v5)
	(v5) -- (v6)
	(v10) -- (v2)
	(v10) -- (v4)
	(v10) -- (v3)
	(v1) -- (v6);
	\draw[line width=0.5mm, line cap=round](5.8,5)--(6.7,5);
	\draw[line width=0.5mm, line cap=round](5.8,5.5)--(6.7,5.5);
	\draw[line width=0.5mm, line cap=round](5.8,5)--(5.8,5.5);
	\draw[line width=0.2mm](5.9,5)--(5.9,5.5);
	\draw[line width=0.2mm](6.0,5)--(6.0,5.5);
	\draw[line width=0.2mm](6.1,5)--(6.1,5.5);
	\draw[line width=0.2mm](6.2,5)--(6.2,5.5);
	\draw[line width=0.2mm](6.3,5)--(6.3,5.5);
	\draw[thick,->] (7.1,5.25) -- (6.6,5.25);
	\end{tikzpicture}
	\caption{A simplified depiction of a Wireless Multihop Network}
	\label{fig1}
\end{figure}
\indent We want to develop scheduling policies such that the different flows obtain their  end-to-end mean delay deadline guarantees. Our network control policy, Queue Weighted Discrete Review (QWDR), is as follows. We have a sequence of \emph{review times} $0=T_0<T_1<T_2....$, chosen as
\begin{align}
T_{i+1}=T_i+\max(1,\log(1+k_0||Q(T_i)||)), \label{ReviewTime}
\end{align}
 where the $\hat{T}_i:=T_{i+1}-T_i$ are called \emph{review periods}  and $||Q(t)||=\sum_{i,f}Q_i^f(t)$. Define $Q_{ij}^f=\max(Q_i^f-Q_j^f,0),  Q^f(t)=\sum_i Q_{i}^f(t)$.
 At each $T_i$, we solve the optimization problem,
 \begin{gather}
 \max\sum_{i,j,f}w(Q^f(T_i),\overline{Q}^f) Q_{ij}^f(T_i)\zeta_{ij}^f\mu_{ij}, \label{optFun}\\
 s.t\ 0\leq\zeta_{ij}:=\sum_{f\in F}\zeta_{ij}^f\leq 1\ \forall ij, \label{optCon1}\\
 0\leq\zeta_{ij}+\zeta_{kl}\leq 1,\ \forall (i,j),(k,l)\in I_m, \forall m, \label{optCon2}
 \end{gather}
 assuming $Q_{ij}^f>0$ for at least one link flow pair $(i,j),f$. If all $Q_{ij}^f$ are zero, we define the solution to be $\zeta_{ij}^f=0$ for all $i,j,f$. The first constraint corresponds to the fact that flows cannot simultaneously be scheduled on a link, and the second constraint corresponds to interference constraints. In (\ref{optFun}), we optimize the sum of rates  weighted by the function $w$ as well as the queue lengths. More weight may be given to  flows with larger backlogs, while the $w$ function captures the delay requirement of the flow. These are chosen such that flows requiring a lower mean delay would have a higher weight compared to flows needing a higher mean delay. Also, flows whose mean delay requirements are not met should get priority over flows whose requirements have been met. The weights $w$ therefore are functions of the state, and $\overline{Q}^f$ denotes a desired value for the queue length of flow $f$. We use the function
 \begin{align}
 w(x, \overline x)=1+\frac{a_1}{1+\exp (-a_2(x-\overline x))}. \label{logiF}
 \end{align}
 \noindent Thus $w$ is close to $1+a_1$ when $x$ is larger than $\overline x$, and reduces to $1$ as $x$ reduces. Thus, delays which are above certain thresholds obtain higher weights in the optimization function.  We seek to regulate the queue lengths using  $w$  with a careful selection of  $\overline{Q}^f$, and thereby control the delays. For any flow, the $\overline{Q}^f$ are chosen in the following manner. If the required end-to-end mean delay of the flow with arrival rate $\lambda$ is $\overline D$, we choose $\overline{Q}^f=\lambda\overline D$. In some sense, we are taking the queue length equivalent to the required delay using Little's Law and using it as a threshold that determines the scheduling process.\\ 
\indent The network control variables $\zeta_{ij}^f$ correspond to the fraction of time in one review period in which link $(i,j)$ will be transmitting flow $f$. In a review period, we will assume that the channel gain is fixed (slow-fading), but drawn as an \emph{i.i.d} sequence from a bounded distribution $\pi$. Each node transmits at unit power. The rate over link $(i,j)$ is  $\mu_{ij}=\log(1+\frac{\gamma_{ij}}{\sigma^2})$. Let $C_m(t)$ be the number of time slots till time $t$ in which the channel was in state $m$.  Let $C_{ijf}^{mI}(t)$ be the number of slots till time $t$, in which channel state was $m$, the schedule was $I$ and flow $f$ was scheduled over $(i,j)$. Clearly, for any $i$, $j$, $f$,
\begin{align}
\sum_{I}C_{ijf}^{mI}(t)=C_m(t). \label{CIdentity}
\end{align}
 \section{Gradient Ascent and Distributed Implementation}
The optimization problem is separable into link-flow elements, with each link-flow element being a unique $(i,j,f)$ in the network. Let $\mathcal{K}$ be the set of all link-flow elements. Any $k\in\mathcal{K}$ corresponds one-to-one with a link-flow element $(i,j,f)$; we  call this mapping from all $(i,j,f)$ to $\mathcal{K}$ as $\phi$. Consider the  optimization problem
 \begin{align}
 \max_{\zeta\in\mathcal{S}}\sum_{k\in\mathcal{K}} G_k(\zeta)
 \end{align}
 with $G_k(\zeta)=w(Q^f,\overline{Q}^f) Q_{ij}^f\zeta_{ij}^f\mu_{ij}$, and $\zeta=\{\zeta_{ij}^f\}_{i,j,f}$ where $k=\phi(i,j,f)$ and $\mathcal{S}$ is the set of $\zeta$ that satisfy  constraints (\ref{optCon1}) and (\ref{optCon2}); however, we remove the assumption that the $\zeta$ variables are positive. This is equivalent to the optimization problem (\ref{optFun}) . This is a linear optimization problem with linear constraints.  One can then define a sequential iteration
 \begin{align}
 \zeta^{j+1}=\Pi_{\mathcal{S}}[\zeta^j+\alpha\nabla G_{k_j}(\zeta^j)] \label{eqnUpdateIter}
 \end{align}
 with $\zeta^0$ being an arbitrary initial point,  $k_j=j$ modulo $|\mathcal{K}|+1$, and $\Pi_{\mathcal{S}}$ denoting projection into the set $\mathcal{S}$. This is cyclic Incremental Gradient Ascent. Let $\max_{i,j,f} w(Q^f,\overline{Q}^f) Q_{ij}^f \mu_{ij}= c_1$. 
 From Proposition 3.2 of \cite{Ber10}, the following holds.
 \begin{lem}
 	The iterates $\zeta^j$ given by equation (\ref{eqnUpdateIter}) satisfy
 	\begin{align*}
 	\lim_{j\to\infty}\sup \sum_{k\in\mathcal{K}} G_k(\zeta^j) \geq G^*-c,
 	\end{align*}
 	where  	$G^*= \max_{s\in\mathcal{S}}\sum_{k\in\mathcal{K}} G_k(s)$, $c=\frac{\alpha(4+|\mathcal{K}|^{-1}) |\mathcal{K}|^2c_1^2}{2}$.
 \end{lem}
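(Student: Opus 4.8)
The plan is to verify that the iteration (\ref{eqnUpdateIter}) is precisely the constant-stepsize cyclic incremental gradient method analysed in \cite{Ber10}, and then to invoke its Proposition~3.2 after a sign change that turns our maximization into the minimization treated there. Writing $f_k:=-G_k$ and $f:=\sum_{k\in\mathcal{K}}f_k$, the descent step $\zeta^{j+1}=\Pi_{\mathcal{S}}[\zeta^j-\alpha\nabla f_{k_j}(\zeta^j)]$ coincides with (\ref{eqnUpdateIter}), and an estimate of the form $\liminf_{j\to\infty}f(\zeta^j)\le f^*+c$ is equivalent to the claimed $\limsup_{j\to\infty}\sum_{k}G_k(\zeta^j)\ge G^*-c$ with $f^*=-G^*$. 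So it suffices to check the hypotheses of the proposition and to match its error constant.

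First I would record the structural facts that make the hypotheses hold. The feasible set $\mathcal{S}$ is the polytope cut out by the linear constraints (\ref{optCon1})--(\ref{optCon2}) (with the positivity requirement dropped), hence nonempty, closed and convex, so the projection $\Pi_{\mathcal{S}}$ is well defined and nonexpansive. Each $G_k(\zeta)=w(Q^f,\overline{Q}^f)Q_{ij}^f\zeta_{ij}^f\mu_{ij}$ is affine in $\zeta$, therefore convex and continuously differentiable, and the cyclic rule $k_j=j \bmod (|\mathcal{K}|+1)$ visits each element of $\mathcal{K}$ within every cycle, which is the cyclic order required by the proposition.

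The one quantitative input is a uniform bound on the component gradients. Since $G_k$ depends on $\zeta$ only through the single coordinate $\zeta_{ij}^f$ with $k=\phi(i,j,f)$, its gradient $\nabla G_k$ is a constant vector whose only nonzero entry equals $w(Q^f,\overline{Q}^f)Q_{ij}^f\mu_{ij}$; hence $\|\nabla G_k(\zeta)\|\le c_1$ for every $k$ and every $\zeta$, where $c_1=\max_{i,j,f}w(Q^f,\overline{Q}^f)Q_{ij}^f\mu_{ij}$. Crucially, because the $G_k$ are affine their gradients do not depend on the iterate, so this bound is automatically uniform along the whole trajectory $\{\zeta^j\}$ and no separate argument bounding the iterates is needed. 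Substituting the per-component bound $c_1$ and the number of components $|\mathcal{K}|$ into the error constant of Proposition~3.2 then yields $c=\frac{\alpha(4+|\mathcal{K}|^{-1})|\mathcal{K}|^2 c_1^2}{2}$, which is exactly the claimed value.

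I expect essentially no conceptual obstacle: the linearity of the $G_k$ trivializes both the smoothness requirement and the uniform-gradient-bound requirement, which are usually the delicate points in applying incremental-gradient error bounds. The only genuine care is bookkeeping --- confirming that the cyclic index sweeps all of $\mathcal{K}$ within a cycle, and that the sign flip correctly converts the minimization $\liminf$ estimate of \cite{Ber10} into our maximization $\limsup$ estimate --- together with reading off the precise functional form of their error constant so that the substitution $C=c_1$, $m=|\mathcal{K}|$ reproduces $c$ term for term.
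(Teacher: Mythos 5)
Your proposal is correct and takes essentially the same route as the paper: the paper's entire proof consists of invoking Proposition~3.2 of \cite{Ber10}, which is exactly what you do after the standard sign flip from maximization to minimization. The hypothesis checks you supply (closed convex polyhedral $\mathcal{S}$, affine components whose constant gradients are bounded by $c_1$, cyclic sweep through $\mathcal{K}$, and matching the error constant with $m=|\mathcal{K}|$, $C=c_1$) are precisely the bookkeeping the paper leaves implicit.
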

Thus, given the optimization problem to be solved at a particular time, we can use the gradient ascent method (\ref{eqnUpdateIter}) to arrive at an optimal point in a distributed sequential fashion. First, obtain $\zeta^j+\alpha\nabla G_{k_j}(\zeta^j)$, and then  project onto $\mathcal{S}$. Since $\nabla G_{k}=w(Q^f,\overline{Q}^f) Q_{ij}^f\mu_{ij}$ where $k=\phi(i,j,f)$, the first step is clear. The projection step is described below.
  \subsection{Distributed Projection} 
   Two links that share a node are assumed to interfere with each other. Therefore, an update of the optimization variables at a $k=\phi(i,j,f)$ will affect only those links  which have either $i$ or $j$ as end points. The set $\mathcal{S}$ is defined by the intersection of half-spaces $\{\mathcal{H}_i\}_{i=1}^M$, where each  $\mathcal{H}_i$ is characterized by an equation $\langle \zeta,\nu^i\rangle\leq \beta_i$,
   with $\nu^i$ being the unit normal vector. Due to the nature of our constraints, $\nu^i$ is non-negative.\\   
   \indent Each  half-space corresponds to one interference constraint. During an update step, a point may break at most two interference constraints. This is because each link has two sets of constraints, one for each end. If one constraint is broken, one step of projection will suffice. If we break both constraints, we can iteratively project it, first to one hyperplane, then the next and so on repeatedly.   
   It can be shown \cite[Theorem 13.7]{Neumann} that this scheme  converges to the projection  onto  the intersection of the hyperplanes.  We will now obtain the analytical expressions for projecting a point onto a hyperplane. Let the hyperplane $\mathcal{H}$   be defined by $\langle \zeta,\nu\rangle\leq \beta$.
   Let the point $s$ lie outside $\mathcal{S}$, i.e.,
   \begin{align*}
   \beta^*\triangleq\langle s,\nu \rangle >\beta.
   \end{align*}
   Define $r=s-(\beta^*-\beta)\nu$. It is easy to see that $r$ satisfies $\langle r, \nu \rangle =\beta$. Since 	$s-r=(\beta^*-\beta)\nu$, and  $\nu$ is normal to the plane boundary of $\mathcal{H}$, it follows that $r$ is the perpendicular projection of $s$ onto  $\mathcal{H}$. It can also be shown  that $r$ satisfies all the other hyperplane constraints that $s$ does.  
   
    Now we will describe the complete algorithm.
   
\subsection{Algorithm Description}
 At each $T_i$, the problem (\ref{optFun})-(\ref{optCon2}) is solved in a distributed fashion. The nodes calculate $\zeta_{ij}^f$  for all $i$, $j$ and $f$, and  use these till the end of the review cycle.  We will now describe how  $\zeta_{ij}^f$  are calculated at each node.\\
\indent We assume that there is some convenient ordering of the link-flow elements, and computation proceeds in that order. Let this order be $k_1,k_2,...,k_{|\mathcal{K}|}$, and assume that the elements of the vector $s$ are also arranged in the same order. At link-flow element $k_1$, we  update the first component of $s$ as
\begin{align}
s(1)=s(1)+\alpha \nabla G_{k_1}(s).\label{nodeWiseUpdates}
\end{align}
Here $\nabla G_{k_1}(s)=w(Q^f,\overline{Q}^f) Q_{ij}^f\mu_{ij}$, where $\phi(i,j,f)=k_1$. The node then calculates the inner products
\begin{align*}
\beta_1^*\triangleq\langle s,\nu^1\rangle, \beta_2^*\triangleq\langle s,\nu^2\rangle
\end{align*} 
where $\langle s,\nu^1\rangle \leq \beta_1, \langle s,\nu^2\rangle \leq \beta_2$, correspond to the two interference constraints that the update step may break. These correspond to  constraints at the two nodes at which link $(i,j)$ is incident. If exactly one constraint, say $\beta_1$, gets broken, we can project the point back to the constraint set by calculating $\beta_{ex}=\frac{\beta_1^*-\beta_1}{N_1}$ where $N_1$ is the number of links in that interference set. The element communicates $\beta_{ex}$ to all elements in its interference set. All these elements do the update
\begin{align*}
s(k)=s(k)-\beta_{ex}.
\end{align*} 
If both constraints are violated, the above projection step has to be repeatedly done, first for elements corresponding to constraint $\beta_1$, then for $\beta_2$, again for $\beta_1$ and so on.\\ 
\indent  After projection, the node passes $s(1)$ to the node corresponding to the next component of  $s$, and the process is repeated cyclically, i.e, we repeat step (\ref{nodeWiseUpdates}) with 1 replaced by 2, and then by 3 and so on,  across the nodes till a stopping time. At the stopping time, set any negative components of $s$ to zero. For each interference set $I$, we check its constraint $\langle s,\nu\rangle \leq \beta$. If the constraint is not met, do an appropriate scaling. This ensures compliance with the constraints.\\
\indent The complete algorithm is given below, as Algorithm \ref{DistriAlgo}, QWDR (Queue Weighted Discrete Review), which uses in turn, Algorithms \ref{NodeWise}, \ref{NodeProject} and \ref{ScheCreate}. The last algorithm  schedules flows on a link for a fraction of time equal to the corresponding $s(k)$.
\vspace{-5mm}
\begin{algorithm}[h]
	\caption{\textcolor{black}{Algorithm QWDR}}
	\label{DistriAlgo}
	\begin{algorithmic}[1]
		\State $T_O=0$, $T_N=0$, $t\geq 0$				
		\If{$t=T_N$} 			
		\State $T_{O}\gets t, T_{N}\gets t+\max(1,\log(1+k_0\sum_{i,f}Q_i^f(t)))$
		\State Get $s_{ij}^f(T_{O})$ using Algorithm \ref{NodeWise}	
		\State Create schedule using Algorithm \ref{ScheCreate}	
		\EndIf
		\For{all $i,j,f$}
		\If{$\hat S_{ij}^f(t)=1$} \ schedule flow $f$ on link $(i,j)$
		\EndIf
		\EndFor			
	\end{algorithmic}	
\end{algorithm}
\vspace*{-6mm}
\begin{algorithm}[h]
	\caption{\textcolor{black}{Algorithm at node level}}
	\label{NodeWise}
	\begin{algorithmic}[1]
		\State Stopping time $T_s$, $t^{'}=0$, $s_{ij}^f(T_{O})=0$ for all $i,j,f$
		\While{$t^{'}<T_s$}
		\State $k=t^{'}\Mod{|\mathcal{K}|}+1$, $(i,j,f)\ s.t.\ \phi(i,j,f)=k$
		\State $s_{ij}^f(T_O)\gets s_{ij}^f(T_O)+\alpha w(Q_i^f,\overline{Q}_i^f) Q_{ij}^f\mu_{ij}$
		\State Project $s_{ij}^f(T_O)\gets \Pi_{\mathcal{S}}({s_{ij}^f(T_O)})$ using Algorithm \ref{NodeProject}		
		\State $t^{'} \gets t^{'}+1$
		\EndWhile
		\State $s_{ij}^f\gets \max (s_{ij}^f,0)$
		\State If $|s|:=\sum_{j,f} s_{ij}^f+\sum_{j,f} s_{ji}^f>1$, $s_{ij}^f\gets \frac{s_{ij}^f}{|s|}$	
	\end{algorithmic}	
\end{algorithm}
\vspace*{-6mm}
\begin{algorithm}[h]
	\caption{\textcolor{black}{Algorithm for Projection}}
	\label{NodeProject}
	\begin{algorithmic}[1]
		\State Link interference constraints $\langle s,\boldsymbol{\nu}^1\rangle\leq \beta_1, \langle s,\boldsymbol{\nu}^2\rangle\leq \beta_2$
		\State Calculate  $\beta_1^*\triangleq\langle s,\boldsymbol{\nu}^1\rangle, \beta_2^*\triangleq\langle s,\boldsymbol{\nu}^2\rangle$
		\If{$\beta_i^*>\beta_i$  and $\beta_j^*<\beta_j$}
		\State  $\beta_{ex}=\frac{\beta_i^*-\beta_i}{N_i+1}$, $N_i=$ number of interferers.
		\State For all interferers and current link, update $\beta_{ij}^f-\beta_{ex}$.
		\EndIf
		\If {$\beta_1^*>\beta_1$ and $\beta_2^*>\beta_2$}
		\State Repeat 4 to 6 $N\_rep$ times, sequentially for i, j
		\EndIf
	\end{algorithmic}	
\end{algorithm}
\begin{algorithm}[h]
	\caption{Algorithm for Schedule Creation}
	\label{ScheCreate}
	\begin{algorithmic}[1]
		\State Initialize $\hat S_{ij}^f(t)=0\ \forall i,j,f,t$
		\For{$k \leq N$}
		\State Obtain $\hat S_{ij}^f(t)$ for $i\leq k-1$
		\State Obtain $s_{kj}^f(T_{O})$ for all $j,f$
		\State Set of links that interfere with node $k =: N_k$
		\For{$j \in N_k, f\in F, t\in[T_{O},T_{N}]$}
		\If {$(\sum_{i\leq k-1}\hat S_{ij}^f(t)\sum_{i\in N_j}\hat S_{ji}^f(t)=0$ and $\sum_{t^o=T_{O}}^{t}\hat S_{kj}^f(t^o)<s_{kj}^f(T_{N}-T_{O})$} $\hat S_{kj}^f(t)=1$
		\EndIf
		\EndFor		
		\EndFor
	\end{algorithmic}					
\end{algorithm}
\vspace{-6mm}
 \section{Fluid Limit}
Define the system state to be $Y(t)=(Q(t),\tilde Q(t),\tilde S(t))$, where the process $\tilde Q(t)=Q(T)$ with $T=\sup\{s\leq t:s=T_i \text{\ for some }i\}$, representing the queue values at the last review instant, and $\tilde S_{ij}^f(t)=S_{ij}^f(t)-S_{ij}^f(T)$ representing the cumulative allocation vector from the last review instant to the current time. From the queue evolution (\ref{actualQueue}) and the allocation, it is clear that the system $Y(t)$ evolves as a discrete time countable Markov chain, since at any time $t$ the next state may be computed by solving the optimization (\ref{optFun}) with $Q$ replaced by $\tilde Q$, and using the cumulative allocation process $\tilde S$ to determine how allocation must be done in the next slot to satisfy the solution of (\ref{optFun}). The associated norm is $||Y(t)||=\sum_{i,f}(Q_i^f+\tilde{Q}_i^f)+\sum_{i,j,f}\tilde{S}_{ij}^f$. Positive recurrence of this Markov chain would imply stability. We will show positive recurrence of this Markov chain via its fluid limit.\\ 
 \indent Consider a real valued process $h^n(t)$ evolving over (discrete) time $t$, with $n$ being its initial state. Consider a sequence of these processes as $n\to\infty$. Define the corresponding scaled (continuous time) process, 
  \begin{align*}
  h(n,t)=\frac{h^{n}(n \left\lfloor t \rfloor\right)}{n}.
  \end{align*}
Define the scaled processes $A_i^f(n,t)$, $Q_i^f(n,t)$, $\tilde Q_i^f(n,t)$, $\tilde S_{ij}^f(n,t)$ and $S_{ij}^f(n,t)$ as above.  For a scaled process $h(n,t)$, denote $h^n=\{h(n,t),t\geq 0\}$. For the vector processes  $A(t)$, $Q(t)$, $\tilde{Q}(t)$, $S(t)$ and $\tilde S(t)$, we define the corresponding scaled vector processes. The term fluid limit denotes the limits obtained as we scale $n\to\infty$ for these processes. Consider $\mathcal{X}=\{Q,\tilde{Q},A,S,\tilde S,C\}$. The process $Y$ is a projection of $\mathcal{X}$.\\
\indent We assume that the rates satisfy $\mu_{ij}(t)\leq\mu_{max}$. This will happen since the channel gains are assumed bounded  and transmit power is fixed. Consider the scaled process $\mathcal{X}^n=\{Q^n,\tilde{Q}^n,A^n,S^n,\tilde S^n,C^n\}$.
 We use the following definition.
 \newtheorem{definition}{Definition}
 \begin{definition}
 	A sequence of functions $g_n$ is said to converge uniformly on compact sets (u.o.c) if $g_n\to g$ uniformly on every compact subset of the domain. 
 \end{definition}
We obtain the following result for the components of $\mathcal{X}^n$. 
\begin{thm}
	For almost every sample path $\omega$, there exists a subsequence $n_k(\omega)\to\infty$ such that,
	\begin{gather}
	A_i^f(n_k,t) \to a_i^f(t),\ \ \ S_{ij}^f(n_k,t) \to s_{ij}^f(t),\label{convS}\\
	C_m(n_k,t) \to c_m(t),\ \ \ C_{ijf}^{mI}(n_k,t) \to c_{ijf}^{mI}(t), \label{convC}\\
		Q_i^f(n_k,t) \to q_i^f(t),\ \ \ \tilde{Q}_i^f(n_k,t) \to q_i^f(t), \label{convQtilde}\\	\tilde{S}_i^f(n_k,t) \to 0, \label{convStilde}
	\end{gather}
	uniformly on compact sets, for all $i$, $j$ and $f$.
	The limiting functions are also Lipschitz continuous, and hence almost everywhere differentiable. The points $t$ at which it is differentiable are called regular points.	
	In addition, the limiting functions satisfy the following properties.
	\begin{align}
	a_i^f(t) = \lambda_i^f t,\ \ \ \  c_m(t) = \pi_m t, \label{limitChan}
	\end{align}	
	\begin{align}
	q_i^f(t) &= q_i^f(0) +a_i^f(t)+\sum_{k}s_{ki}^f(t)-\sum_{j}s_{ij}^f(t), \label{fluQu}
	\end{align}
	\begin{align}
	\dot{q}_i^f(t) = \lambda_i^f+\sum_{k}\dot{s}_{ki}^f(t)-\sum_{j}\dot{s}_{ij}^f(t), \label{fluQuDeriv}
	\end{align}
	\begin{align}
	\sum_I c_{ijf}^{mI}(t) = c_m(t),\ \ \  ||q(0)|| \leq 1, \label{InitCondtn}
	\end{align}
	\begin{align}
	s_{ij}^f(t) = \int_0^t \dot{s}_{ij}^f(\tau)d\tau, \label{fluAlloc}
	\end{align}
	where  $\dot{s}(t)$ satisfies
	\begin{align}
\sum_{i,j,f} w(q^f(t))q_{ij}^f(t)\dot{s}_{ij}^f(t) = \max_{\bar\mu} \sum_{i,j,f}w(q^f(t))q_{ij}^f(t)\bar{\mu}_{ij}, \label{allocDeriv}
	\end{align}
	where the dot indicates derivative, at regular t and $\bar{\mu}=\sum_m \pi_m\mu(m,I)$.
\end{thm}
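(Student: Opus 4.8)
The plan is to establish the fluid limit through the standard two-stage argument used in fluid-limit analysis of stochastic networks (as in \cite{dai1995positive}, \cite{Meyn08}): first extract convergent subsequences via a tightness/equicontinuity argument, then identify the limits by passing to the limit in the scaled dynamical equations. I would begin by noting that each scaled cumulative process is nondecreasing with uniformly bounded increments. The arrivals $A_i^f$ are i.i.d.\ with finite mean, so by the functional strong law of large numbers $A_i^f(n_k,t)\to\lambda_i^f t$ almost surely, giving the first part of (\ref{limitChan}); similarly the channel-occupancy counters $C_m$ converge to $\pi_m t$ because the channel states are i.i.d.\ from $\pi$. The service processes $S_{ij}^f$ increase by at most one per slot (a link serves at most one packet per slot), so $S_{ij}^f(n_k,\cdot)$ are uniformly Lipschitz with constant one; the counters $C_{ijf}^{mI}$ are likewise Lipschitz. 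By the Arzel\`a--Ascoli theorem this family is relatively compact in the u.o.c.\ topology, so along a common subsequence $n_k(\omega)$ all the processes in $\mathcal{X}^n$ converge u.o.c.\ to Lipschitz limits, which are therefore absolutely continuous and differentiable almost everywhere; this yields (\ref{convS})--(\ref{convC}) and the Lipschitz/regular-point claim. Equation (\ref{CIdentity}) passes directly to the limit to give the first identity in (\ref{InitCondtn}), and the initial-condition normalization $\|q(0)\|\le 1$ follows because we scale by $n$ while $\|Y^n(0)\|=n$ (up to the usual $\le 1$ from the discrete state).

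Next I would handle the queue identities. Dividing the exact queue evolution (\ref{actualQueue}) by $n$ and taking $t\to n\lfloor t\rfloor$, the linear relation (\ref{fluQu}) follows immediately from the already-established convergence of $A$ and $S$, and differentiating the absolutely continuous limit gives (\ref{fluQuDeriv}) together with $\dot a_i^f=\lambda_i^f$. The key structural facts $\tilde Q_i^f(n_k,t)\to q_i^f(t)$ in (\ref{convQtilde}) and $\tilde S_{ij}^f(n_k,t)\to 0$ in (\ref{convStilde}) both rest on the review-period length being logarithmically small relative to the scaling: from (\ref{ReviewTime}), $\hat T_i=\max(1,\log(1+k_0\|Q(T_i)\|))$, so even for queue lengths of order $n$ the review periods are only $O(\log n)$. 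After scaling by $n$, a review period shrinks to $O(n^{-1}\log n)\to 0$, so the time since the last review instant vanishes in the limit. Since $\tilde Q(t)=Q$ at the last review point and $Q$ is Lipschitz, $\tilde Q$ and $Q$ have the same limit; and since $\tilde S$ accumulates allocations over at most one (vanishing) review period with Lipschitz-$1$ increments, its scaled version converges to $0$. I would make this precise by bounding $|Q_i^f(t)-\tilde Q_i^f(t)|$ and $\tilde S_{ij}^f(t)$ by (a constant times) $\hat T_i$ and invoking the logarithmic bound.

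The main obstacle, and the heart of the theorem, is the allocation characterization (\ref{allocDeriv}): the limiting service rates must solve, pointwise at regular $t$, the same weighted-rate maximization that the algorithm solves at each review instant. The delicate point is that the discrete-review optimization (\ref{optFun}) uses the \emph{per-slot, state-$m$-dependent} rates $\mu_{ij}(I,\gamma)$ and the frozen queue lengths $\tilde Q$, whereas the fluid allocation is expressed through the channel-averaged rate $\bar\mu=\sum_m\pi_m\mu(m,I)$ and the current $q$. To bridge these I would express the cumulative allocation over a review period in terms of the counters $C_{ijf}^{mI}$, use the identity (\ref{InitCondtn}) (i.e.\ (\ref{CIdentity}) scaled) to see that within a vanishing fluid-time interval the empirical channel frequencies equal $\pi_m$, and then argue that because $\tilde Q_i^f\to q_i^f$ and the weights $w(\cdot)$ are continuous, the optimizer of the frozen-queue problem converges to an optimizer of the averaged problem on the right-hand side of (\ref{allocDeriv}). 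The care required here is twofold: the optimization is a linear program whose maximizer need not be unique, so one can only assert that the limiting $\dot s$ attains the optimal \emph{value} (which is exactly what (\ref{allocDeriv}) asserts, an equality of objective values rather than of maximizers); and one must control the approximation error incurred by the incremental gradient ascent, invoking Lemma 1 with the step parameter chosen so that the additive gap $c$ is negligible after scaling (for instance by letting $\alpha\to 0$ or the number of iterations grow with $n$). Assembling these, the scaled objective evaluated at the algorithm's output converges to the optimal averaged value, establishing (\ref{allocDeriv}) at every regular $t$ and completing the proof.
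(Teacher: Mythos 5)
Your route is essentially the paper's route: functional SLLN for $A$ and the channel counters, uniform Lipschitz bounds plus Arzel\`a--Ascoli for $S$ and $C_{ijf}^{mI}$ (hence for $Q$ via (\ref{actualQueue})), differentiation of the absolutely continuous limits, and identification of $\dot s$ by telescoping $S_{ij}^f=\sum_{m,I}C_{ijf}^{mI}\mu_{ij}(I,m)$ over a fluid-time interval, weighting by $w\,Q_{ij}^f$ at the last review instant, averaging the channel states via $c_m(t)=\pi_m t$, and letting the interval shrink. Your observations that the scaled review period is $O(n^{-1}\log n)$ (which gives (\ref{convQtilde}) and (\ref{convStilde})) and that (\ref{allocDeriv}) asserts equality of objective values rather than of maximizers also match the paper. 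However, there is one genuine gap: you never reconcile the \emph{actual} integer-slot schedule produced by the algorithm with the \emph{ideal} fractional allocation $\zeta_{ij}^f\hat T_i$ solving (\ref{optFun}). Each review period realizes $\zeta_{ij}^f\hat T_i$ only up to a round-off error of order $\mu_{max}$ per link-flow pair, and these errors accumulate over review periods. The paper bounds $|\hat S_{ij}^f(nt)-S_{ij}^f(nt)|\leq \mu_{max}\hat T_m+m\mu_{max}$, where $m$ is the number of review periods up to time $nt$, and the crucial point is that $m\leq nt/T$ with $T=\min_{i<m}\hat T_i\to\infty$ as $n\to\infty$: since the initial queue is of order $n$, review periods have length of order at least $\log n$ in \emph{unscaled} time, so there are only $o(n)$ of them in $[0,nt]$ and the cumulative round-off vanishes after dividing by $n$. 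Note this is the opposite direction of the property you used -- you invoke only that review periods are small \emph{after} scaling, whereas the round-off argument needs them to be large \emph{before} scaling. If review periods were $O(1)$ (say, reviewing every slot), there would be $\Theta(n)$ periods, total round-off $\Theta(n)$, and the fluid limit of the allocation actually implemented would not be pinned down by your argument. Without this step you have characterized the fluid limit of an idealized allocation, not of the process the algorithm generates.

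Two smaller points. First, your claim that a link serves at most one packet per slot (Lipschitz constant $1$ for $S_{ij}^f$) is false in this model: the per-slot rate is $\mu_{ij}=\log(1+\gamma_{ij}/\sigma^2)$, bounded by $\mu_{max}$, and the paper's equicontinuity bound uses $\mu_{max}$; harmless, but the constant should be corrected. Second, your treatment of the incremental-gradient suboptimality departs from the paper: the paper's proof simply takes the review-instant allocation to exactly attain the max in (\ref{optFun}) and never quantifies the gap from Lemma 1. Your concern is in fact legitimate (the gap $c$ in Lemma 1 scales with $c_1^2$, i.e.\ quadratically in queue length, while the optimal value scales linearly), but your proposed remedy -- letting $\alpha\to 0$ or the number of iterations grow with $n$ -- modifies the algorithm and is left unquantified, so as written it neither matches the paper nor closes that issue rigorously.
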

\begin{proof}
The Strong Law of Large Numbers implies
\begin{align*}
A_i^f(n_k,t) \to a_i^f(t),
\end{align*}
for any subsequence $n_k\to\infty$, with $a_i^f(t)=\lambda_i^f t$. Thus we obtain the first parts of (\ref{convS}) and (\ref{limitChan}).
Since the rates are bounded, it follows that $S_{ij}^f(t)\leq\mu_{max}t$. Therefore, for $0\leq t_1\leq t_2$, we have
\begin{align*}
S_{ij}^f(n t_2)-S_{ij}^f(n t_1)\leq n(t_2-t_1)\mu_{max}.
\end{align*}
Thus, the family of functions $\{\frac{1}{n} S_{ij}^f(nt)\}$ is uniformly bounded and equicontinuous. By the Arzela-Ascoli theorem \cite{babyRudin}, we can see that for any sequence $S_{ij}^f(n,t)$ with   $n\to\infty$, there exists a subsequence $n_k\to\infty$ along which
\begin{align*}
S_{ij}^f(n_k,t)\to s_{ij}^f(t),
\end{align*}
as $n_k\to\infty$ wp 1, u.o.c. This implies the second part of (\ref{convS}). The resultant $s_{ij}^f$ is Lipschitz, being the result of uniform convergence of a sequence of Lipschitz functions. The first part of (\ref{convC}), and second part of (\ref{limitChan}) follow from the Strong Law of Large Numbers applied to the channel process. From equation (\ref{actualQueue}), we can see that the terms on the right hand side converge under this scaling. Consequently, 
\begin{align*}
{Q}_{i}^f(n_k,t)\to q_{i}^f(t),
\end{align*}
wp 1 u.o.c, as $n_k\to \infty$. Like $s_{ij}^f(t,\omega)$, both $a^f_i(t,\omega)$ and $q_i^f(t,\omega)$ will be Lipschitz.  Equation (\ref{fluQu}) follows by observing that the scaled queue process will satisfy the queueing equation (\ref{actualQueue}), and applying the appropriate limit in that equation.
\indent Since the fluid variables $q,a$ and $s$ are Lipschitz, they are differentiable almost everywhere. At the points where they are differentiable, we obtain (\ref{fluQuDeriv}) by differentiating (\ref{fluQu}). The first part of (\ref{InitCondtn}) follows from (\ref{CIdentity}).\\
\indent To see the second part of (\ref{convC}), observe that, by definition,
\begin{align}
\frac{1}{n}C_{ijf}^{mI}(nt_2)-C_{ijf}^{mI}(nt_1)\leq t_2-t_1,
\end{align}
for $t_2>t_1$. Applying Arzela-Ascoli theorem, we obtain the subsequence that satisfies the second part of (\ref{convC}).\\
\indent Since $s$ is almost everywhere differentiable, (\ref{fluAlloc}) follows.
 In obtaining the fluid limit of the allocation process $S$, we will not distinguish between the actual and the ideal allocation, since they converge to the same limit. Let the actual allocation be $\hat{S}_{ij}^f(t)$. The actual allocation differs from the ideal allocation due to round-off errors. At a time $nt$, let $m=\max\{i:T_i\leq nt\}$. Bounding possible errors in each review period we get,
\begin{align*}
|\hat S_{ij}^f(nt)-S_{ij}^f(nt)|\leq \mu_{max}\hat T_{m}+m\mu_{max}.
\end{align*}
The last term follows by summing up round-off errors in review periods upto $m$, and observing that in any review period $\hat T$, errors are of the form $\mu_{ij}|x-\lfloor x\rfloor|$, where $x=\zeta_{ij}^f\hat T$. Since $m\leq \frac{nt}{T}$, where $T=\min_{i<m}\{\hat{T}_i\}$, we get
 \begin{align*}
 |\hat S_{ij}^f(n,t)-S_{ij}^f(n,t)|\leq \mu_{max}\{\frac{\hat{T}_m}{n}+\frac{t}{T}\}.
 \end{align*}
 Since $\hat T_i$ are $\max(1,\log (1+k_0||Q||))$ and $\lim_{n\to\infty}||Q||=\infty$, we have $\lim_{n\to\infty}T=\infty$ and $\lim_{n\to\infty}\frac{\hat{T}_m}{n}=0$, and hence, the fluid limits of $\hat S$ and $S$ are equal.\\
 \indent To show (\ref{allocDeriv}), observe that
\begin{align*}
S_{ij}^f(t)=\sum_{m,I} C_{ijf}^{mI}(t)\mu_{ij}(I,m).
\end{align*}
Hence we have
\begin{align*}
S_{ij}^f(nt_2)-S_{ij}^f(nt_1)=\sum_{m,I} (C_{ijf}^{mI}(nt_2)-C_{ijf}^{mI}(nt_1))\mu_{ij}(I,m).
\end{align*}
 Multiplying LHS and RHS by $w(Q^f(n,t_1))Q_{ij}^f(n,t_1)$, summing over \emph{i}, \emph{j}, \emph{f}, and taking $n\to\infty$, the LHS becomes\\
\begin{align}
\sum_{i,j,f} w(q^f(t_1))q_{ij}^f(t_1)[s_{ij}^f(t_2)-s_{ij}^f(t_1)], \label{LHSeq}
\end{align}
where $q_{ij}^f(t)=\max(q_i^f(t)-q_j^f(t),0)$ and $q^f(t_1)=\lim_{n\to\infty}Q^f(n,t_1)=\sum_i q_i^f(t)$.
Since the allocation satisfies
\begin{align*}
\sum w(Q^f(n,t^{'}))Q_{ij}^f(n,t^{'})\mu_{ij}(I,m)\\
=\max_I \sum w(Q^f(n,t^{'}))Q_{ij}^f(n,t^{'})\mu_{ij}(I,m),
\end{align*} 
where $nt^{'}$ was the previous review point with $nt_1=nt^{'}+T$. Since $\frac{T}{n}\to 0$, we can write $q_i^f(t_1)$ as
\begin{align}
\lim_{n\to\infty}Q_i^f(n,t_1)=\lim_{n\to\infty}\frac{1}{n}Q_i^f(n(t^{'}+\frac{T}{n}))=q_i^f(t^{'}). \label{FluLiEqui}
\end{align}
The RHS can therefore be written as
\begin{align*}
\sum_{m,I,i,j,f}[c_{ijf}^{mI}(t_2)-c_{ijf}^{mI}(t_1)]\max_I\mu_{ij}(I,m)w(q^f)q_{ij}^f.
\end{align*}
Using (\ref{InitCondtn}) and (\ref{limitChan}), this becomes
\begin{align}
&\sum_{m,i,j,f} [c_m(t_2)-c_m(t_1)]\max_I\mu_{ij}(I,m)w(q^f)q_{ij}^f,\\
= &(t_2-t_1)\sum_{m,i,j,f}\pi_m\max_I\mu_{ij}(I,m)w(q^f)q_{ij}^f. \label{RHSeq}
\end{align}
Dividing (\ref{LHSeq}) and (\ref{RHSeq}) by $t_2-t_1$, equating, and taking $t_2\to t_1$,
\begin{align*}
&\sum_{i,j,f} w(q^f(t_1))q_{ij}^f(t_1)\dot{s}_{ij}^f(t_1)= \max_{\bar\mu} \sum_{i,j,f}w(q^f)q_{ij}^f\bar{\mu}_{ij},
\end{align*}
where $\bar{\mu}=\sum_m \pi_m\mu(m,I)$. Thus we obtain (\ref{allocDeriv}). The second part of (\ref{convQtilde})follows from (\ref{FluLiEqui}). To obtain (\ref{convStilde}), observe that
\begin{align*}
0\leq \tilde{S}_{ij}^f(n,t)\leq \mu_{max}\frac{\hat T}{n},
\end{align*}
with $\hat T$ being a review period. Taking $n\to\infty$, (\ref{convStilde}) follows. Since $||Q(n,0)||=1$, the second part of (\ref{InitCondtn}) follows. 
\end{proof}
\vspace{-2mm}
Denote the vector of all $q_i^f(t)$ by $q(t)$. We will use the following result to establish the stability of the network.
\begin{thm}\label{StabLem}(Theorem 4 of \cite{andrews2004scheduling})
	Let $Y$ be a Markov Process with $||Y(.)||$ denoting its norm. If there exist $\alpha>0$ and a time $T>0$ such that for a scaled sequence of processes $\{Y^n,n=0,1,2,..\}$, we have
	\begin{align*}
	\lim_{n\to\infty}\sup \mathbb{E}[||Y(n,T)||]\leq 1-\alpha,
	\end{align*}
	then the process $Y$ is stable (positive recurrent).
\end{thm}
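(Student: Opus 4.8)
The statement is a Foster--Lyapunov stability criterion written in fluid-scaled coordinates, so my plan is to take $V(y)=\|y\|$ as the Lyapunov function and obtain positive recurrence from a \emph{state-dependent} negative-drift condition. The first step is to undo the scaling. With the convention $h(n,t)=h^n(n\lfloor t\rfloor)/n$ and the normalization $\|Y(n,0)\|=1$ (so that $Y^n$ is the chain started from a state of norm $n$), and reading the hypothesis as holding uniformly over the initial states that produce the scaled sequence, $\limsup_n\mathbb{E}[\|Y(n,T)\|]\le 1-\alpha$ means that for every $\alpha'\in(0,\alpha)$ there is an $n_0$ such that for all starting states $y$ with $\|y\|=n\ge n_0$,
\begin{align*}
\mathbb{E}\big[\,\|Y(n\lfloor T\rfloor)\|\ \big|\ Y(0)=y\,\big]\le(1-\alpha')\|y\|.
\end{align*}

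The heart of the argument is to read this inequality as a negative drift of $V$ over a time horizon that scales with the state. Assume without loss that $T\ge1$ and set $\Delta(y):=n\lfloor T\rfloor=\lfloor T\rfloor\|y\|$, a number of slots bounded below by $\|y\|$. The display above then rearranges to
\begin{align*}
\mathbb{E}\big[\,V(Y(\Delta(y)))-V(y)\ \big|\ Y(0)=y\,\big]\le-\alpha'\|y\|=-\tfrac{\alpha'}{\lfloor T\rfloor}\,\Delta(y),
\end{align*}
for all $y$ outside the finite set $C=\{y:\|y\|<n_0\}$. Thus, after a deterministic number of slots proportional to its norm, the expected value of $V$ contracts by a fixed multiplicative factor; iterating along the successive horizons drives $V$ down geometrically until the chain re-enters $C$, which is the mechanism that forces recurrence.

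To convert this into positive recurrence I would invoke a state-dependent drift criterion (the state-dependent Foster theorem of Meyn--Tweedie, equivalently the fluid-model/embedded-chain construction of Dai). Matching notation, the criterion takes the Lyapunov function $V$, the state-dependent sampling time $n(y)=\Delta(y)\ge1$, the exceptional finite set $C$, and $\epsilon=\alpha'/\lfloor T\rfloor$, and the display above is exactly its drift hypothesis. Two side conditions remain to be checked: that $\mathbb{E}[V(Y(\Delta(y)))\mid Y(0)=y]<\infty$, which follows because each slot changes $\|Y\|$ by a quantity of uniformly bounded mean (arrivals are i.i.d.\ with finite mean and each service increment is bounded by $\mu_{max}$), so over the finite horizon $\Delta(y)$ the norm cannot blow up; and that $C$ is petite, which is automatic for the countable, irreducible chain $Y$ since all finite sets are petite. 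The criterion then yields positive recurrence.

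The step I expect to be the main obstacle is precisely the state-dependent horizon. The hypothesis supplies drift only after a time that grows linearly in the norm, whereas the classical one-step Foster criterion wants drift over a fixed number of slots; bridging this gap is the real content. Concretely, the work is to build the chain sampled at the epochs $\Delta(\cdot)$, to show that the geometric multiplicative decrease of $\mathbb{E}[V]$ along this embedded chain translates into a finite expected return time to $C$, and to verify that the excursions between sampling epochs --- controlled through the uniformly bounded per-slot increments --- do not spoil this bound. Establishing that last uniform control of the inter-epoch fluctuations, rather than the drift bookkeeping itself, is where the delicate estimates sit.
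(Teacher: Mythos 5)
The paper never proves this statement: it is imported verbatim as Theorem 4 of the cited reference \cite{andrews2004scheduling} and used as a black box in the proof of throughput optimality (Theorem 3), so there is no in-paper proof to compare yours against. Judged on its own merits, your argument is the standard proof of this criterion and is the same route taken in the source literature: unscale the fluid bound into a multiplicative contraction of $V(y)=\|y\|$ over a horizon proportional to $\|y\|$, then feed this into the state-dependent Foster/Meyn--Tweedie drift criterion with exceptional set $C=\{y:\|y\|<n_0\}$, checking petiteness (finite sets are petite for the countable, irreducible restriction of the chain) and finiteness of the intermediate expectations via the bounded-mean per-slot increments of the network chain. Two remarks. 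First, your closing paragraph undercuts itself: once you invoke the state-dependent criterion, the ``inter-epoch excursion'' control that you describe as the remaining delicate work is precisely what that criterion already delivers; nothing further of that kind is needed, and the only genuine gaps are the ones you flag earlier, namely reading the hypothesis as holding uniformly over initial states of norm $n$ (which is how the theorem is actually stated in \cite{andrews2004scheduling}) and the irreducibility/petiteness bookkeeping. Second, ``assume without loss that $T\ge 1$'' is not literally without loss---a contraction at time $T$ need not persist to later times---but it is harmless here, since under the paper's scaling convention a $T<1$ would make the hypothesis unsatisfiable, so that case never arises.
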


Using this result, we will establish stability of the network under our algorithm and show that it is throughput optimal. We first define the capacity region of the network. A \emph{schedule} $\textbf{s}$ is a mapping from $\mathcal{K}$, the set of all link-flow elements, to $[0,1]$.  Let the set of all feasible schedules be denoted by $\mathcal{S}$ and $\Gamma$ is its convex hull. We define the capacity region as follows.
\begin{definition}
	The capacity region $\Lambda$ is given by the set of all $\lambda\geq 0$ for which there exists an $\textbf{s}\in\Gamma$ such that
	\begin{align}
	\lambda_i^f \leq \sum_m\pi_m\sum_j \textbf{s}(k) \mu_{ij}^m-\sum_m\pi_m\sum_r \textbf{s}(k_r) \mu_{ri}^m, \label{CapReEq}
	\end{align}
	where $\phi(i,j,f)=k,\ \phi(r,i,f)=k_r$, $\pi_m$ is the stationary probability that the  channels are in state $m$, and $\mu_{ij}^m$ is the rate across link $(i,j)$ when the channels are in state $m$.
	\vspace{-2mm}
\end{definition}
Now we establish the throughput optimality of our policy.
	\vspace{-2mm}
\begin{thm}
	The policy QWDR stabilizes the process $\{Q(t),t\geq 0\}$ for all arrivals in the interior of $\Lambda$.
\end{thm}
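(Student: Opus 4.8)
The plan is to build a Lyapunov function for the fluid limit established above and then invoke Theorem~\ref{StabLem}. First I would exploit that $\lambda$ lies in the interior of $\Lambda$: by the definition of $\Lambda$ there exist $\varepsilon>0$ and a schedule $\mathbf{s}^*\in\Gamma$ whose averaged link--flow rates $\bar\mu^{*,f}_{ij}:=\sum_m\pi_m\,\mathbf{s}^*(k)\mu_{ij}^m$ satisfy, for every node $i$ and flow $f$, the strict excess condition $\sum_j\bar\mu^{*,f}_{ij}-\sum_r\bar\mu^{*,f}_{ri}\ge\lambda_i^f+\varepsilon$. This $\bar\mu^*$ is the ``target'' allocation against which the QWDR allocation is compared. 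The overall goal is to show that the fluid trajectory $q(t)$ reaches the origin in finite time, since draining of $q$ forces $\|y(t)\|\to0$ through \eqref{convQtilde} and \eqref{convStilde}, which is exactly what Theorem~\ref{StabLem} requires.

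Next I would take the quadratic Lyapunov function $V(q)=\tfrac12\sum_{i,f}(q_i^f)^2$ and differentiate it along the fluid limit at a regular point. Substituting \eqref{fluQuDeriv} and rearranging the service terms by the usual telescoping (relabelling the inflow sum so that each $\dot s_{ij}^f$ is collected with its head-minus-tail queue difference) gives $\dot V(q(t))=\sum_{i,f}q_i^f\lambda_i^f-\sum_{i,j,f}(q_i^f-q_j^f)\dot s_{ij}^f$. Because links with $q_i^f\le q_j^f$ carry zero coefficient $q_{ij}^f$ in \eqref{allocDeriv}, the maximizing allocation may be taken to place $\dot s_{ij}^f>0$ only where $q_i^f\ge q_j^f$, so the second sum equals $\sum_{i,j,f}q_{ij}^f\dot s_{ij}^f$.

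The decisive step is to lower bound this service term using \eqref{allocDeriv}, and here the state-dependent weights $w$ must be handled. The key observation is that under the fluid scaling the weight seen by the algorithm is $w\big(Q^f(n\lfloor t\rfloor)\big)=w\big(n\,q^f(n,t)\big)$, and since the threshold $\overline Q^f=\lambda\overline D$ is a fixed constant while $n\to\infty$, we have $w\to 1+a_1$ at every time with $q^f(t)>0$; that is, in the limit all active flows carry the \emph{same} constant weight. Consequently \eqref{allocDeriv} reduces to plain max-weight (differential backpressure) up to the common factor $1+a_1$, so $\dot s$ also maximizes $\sum_{i,j,f}q_{ij}^f\bar\mu_{ij}$ over feasible mean rates. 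Comparing with the target $\bar\mu^*$, using $q_{ij}^f\ge q_i^f-q_j^f$, and telescoping once more,
\begin{align*}
\sum_{i,j,f}q_{ij}^f\dot s_{ij}^f\ \ge\ \sum_{i,j,f}q_{ij}^f\bar\mu^{*,f}_{ij}\ \ge\ \sum_{i,f}q_i^f\Big(\sum_j\bar\mu^{*,f}_{ij}-\sum_r\bar\mu^{*,f}_{ri}\Big)\ \ge\ \sum_{i,f}q_i^f(\lambda_i^f+\varepsilon),
\end{align*}
which yields $\dot V(q(t))\le-\varepsilon\sum_{i,f}q_i^f=-\varepsilon\|q(t)\|_1<0$ whenever $q(t)\neq0$.

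Finally, negative drift bounded away from zero together with the initial bound $\|q(0)\|\le1$ from \eqref{InitCondtn} shows that $V(q(t))$, and hence $q(t)$, is driven to $0$ by some finite deterministic time $T$, after which it remains there. Translating back through \eqref{convQtilde}--\eqref{convStilde} gives $\|y(T)\|=0$ for the fluid limit; upgrading the almost-sure u.o.c.\ convergence to convergence of expectations by a uniform-integrability argument (the rates are bounded and the arrivals have enough moments) gives $\limsup_n\mathbb{E}\|Y(n,T)\|=0\le1-\alpha$ for any $\alpha<1$, and Theorem~\ref{StabLem} then yields positive recurrence of $Y$, i.e.\ stability of $\{Q(t)\}$, for every $\lambda$ in the interior of $\Lambda$. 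The main obstacle is precisely the weight term: if one insists on carrying $w(q^f(t))$ as a genuinely state-dependent factor in the limit, the natural remedy is the matched weighted Lyapunov function $\sum_{i,f}w(q^f)(q_i^f)^2$, whose drift differs from the clean computation above only by the correction $\sum_f w'(q^f)\dot q^f\sum_i(q_i^f)^2$; this term is bounded uniformly in $q$ because $w'(x)x^2$ is bounded on $[0,\infty)$ (the logistic derivative decays exponentially) and $\dot q^f$ is bounded, so the negative drift still dominates for all sufficiently large queues.
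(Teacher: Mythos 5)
Your proposal is correct in substance and has the same overall architecture as the paper's proof --- establish negative Lyapunov drift for the fluid limit, conclude the fluid queues drain by a fixed time $T$, then pass to expectations (the paper uses dominated convergence, you invoke uniform integrability, same step) and apply Theorem~\ref{StabLem}. But the two key technical ingredients are genuinely different. First, the Lyapunov function: the paper uses the trajectory functional $L(q(t))=-\int_t^{\infty}\exp(\tau-t)\sum_{i,f}w(q^f)q_i^f\dot q_i^f\,d\tau$, engineered so that its derivative is the $w$-weighted drift $\sum_{i,f}w(q^f)q_i^f\dot q_i^f$, whereas you use the standard quadratic $V(q)=\tfrac12\sum_{i,f}(q_i^f)^2$. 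Second, the weights: the paper carries $w(q^f(t))$ through the entire drift computation, compares the QWDR allocation against the capacity-region rates under those same state-dependent weights via \eqref{allocDeriv}, and only at the end uses $w\ge 1$; you instead observe that the algorithm evaluates $w$ at the \emph{unscaled} queue $Q^f(T_i)$ while $\overline Q^f$ is a fixed constant, so under the fluid scaling $w\to 1+a_1$ for every flow with $q^f(t)>0$ and the fluid allocation is plain differential backpressure up to a common constant. This is a legitimate --- and arguably sharper --- reading of the fluid limit: taken literally, a max-weight property with weights $w(q^f(t))$ evaluated at the scaled queues is not what the prelimit optimization (\ref{optFun}) enforces, precisely because $w$ is nonlinear; the constant-weight version is what the scaling actually produces. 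What the paper's route buys is that its Theorem 3 can quote \eqref{allocDeriv} verbatim; what your route buys is a standard, manifestly well-defined Lyapunov function (the paper's $L$ requires justifying an integral with exponentially growing kernel and its claimed derivative) and a drift computation in which the weights never interfere.

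Two soft spots should be tightened, both locally repairable. (i) Your claim that the maximizing allocation ``may be taken'' to put $\dot s_{ij}^f>0$ only where $q_i^f\ge q_j^f$ is not something you get to choose: $\dot s$ is the limit of what the algorithm actually does, and pairs with $Q_{ij}^f=0$ have zero coefficient in (\ref{optFun}), so an optimizer could in principle place mass on them. The paper devotes an argument to exactly this point: if $\dot s_{ij}^f>0$ while $q_{ij}^f=0$, then at prelimit review instants positive allocation sat on a zero-coefficient variable while some $Q_{kl}^m>0$, contradicting optimality (the allocation could be transferred to the positive-coefficient variable), and when all backpressures vanish the solution is defined to be identically zero. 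Without this fact the identity $\sum_{i,j,f}(q_i^f-q_j^f)\dot s_{ij}^f=\sum_{i,j,f}q_{ij}^f\dot s_{ij}^f$, which your drift bound needs exactly as the paper's does, is unjustified. (ii) ``Negative drift bounded away from zero'' is not literally true, since $\dot V\le-\epsilon\|q\|_1$ vanishes as $q\to 0$; conclude finite-time draining instead from $\tfrac{d}{dt}\sqrt{2V}\le-\epsilon$ (using $\|q\|_1\ge\|q\|_2$), which together with $\|q(0)\|\le 1$ from \eqref{InitCondtn} empties the fluid system by $T=1/\epsilon$, and monotonicity of $V$ keeps it empty thereafter.
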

\vspace{-4mm}
\begin{proof}
	To prove this, we will first pick a suitable Lyapunov function, whose drift will be shown to be negative. \\	
	\indent Pick an arrival rate matrix $\lambda=\{\lambda_i^f\}\in int(\Lambda)$. This implies that there are rates and $\epsilon>0$ that satisfy 
	\begin{align}
	\lambda_i^f+\epsilon < \sum_n r_{in}^f-\sum_m r_{mi}^f,\label{lambInt}
	\end{align}
	for each $i,f$. These rates correspond to the terms in (\ref{CapReEq}).	
	\indent  Consider the Lyapunov function
	\begin{align*}
	L(q(t))=-\int_t^{\infty}\exp(\tau-t)\sum_{i,f} w(q^f)q_{i}^f\dot{q}_i^f d\tau,
	\end{align*}
	where the dot indicates the derivative. This is a continuous function of $q(t)$, with $L(0)=0$. We can write the derivative,
	\begin{align*}
	\dot L(q(t)) &=\sum_{i,f}  w(q^f)q_i^f\dot{q}_i^f,\\
	&= \sum_{i,f}   w(q^f)q_i^f(\lambda_i^f+\sum_m \dot{s}_{mi}^f(t)-\sum_n \dot{s}_{in}^f(t)),\\
	&< -\epsilon\sum_{i,f} w(q^f)q_i^f + \sum_{i,f}   w(q^f)q_i^f(\sum_n r_{in}^f,\\
	&\ \ \ \ -\sum_m r_{mi}^f+\sum_m \dot{s}_{mi}^f(t)-\sum_n \dot{s}_{in}^f(t)),
	\end{align*}
	where the inequality followed from (\ref{lambInt}). 
	Observing that
	\begin{align*}
	\sum_{i,f} w(q^f)q_i^f(\sum_n r_{in}^f-\sum_m r_{mi}^f)=\sum_{i,j,f} w(q^f)r_{ij}^f(q_i^f-q_j^f),
	\end{align*}
	and that a similar equation holds for $r$ replaced by $s$, 	it follows that if we show 
	\begin{align}
	\sum_{i,j,f} w(q^f) r_{ij}^f(q_i^f-q_j^f)\leq \sum_{i,j,f}  w(q^f)\dot{s}_{ij}^f(q_i^f-q_j^f), \label{cruxEqn}
	\end{align}
	it will imply $\dot L(q(t))<0$. 	We have
	\begin{align*}
	\sum_{i,j,f} w(q^f) r_{ij}^f(q_i^f-q_j^f)\leq  \sum_{i,j,f}  w(q^f)r_{ij}^fq_{ij}^f 
	\leq \sum_{i,j,f}  w(q^f)\dot{s}_{ij}^fq_{ij}^f,
	\end{align*}
	where the second inequality follows from (\ref{allocDeriv}).
	Now, if we show that $\dot{s}_{ij}^f=0$ whenever $q_{ij}^f= 0$, (\ref{cruxEqn}) will follow. To see this, assume that at some $t$, $\dot{s}_{ij}^f=\delta^{1}>0$ and $q_{ij}^f= 0$. This would mean that for large enough $n$, there is a time $s$ sufficiently close to $t$ such that, for  $\delta=\frac{\delta^{1}}{2}$,
	\begin{align*}
	S_{ij}^f(nt)-S_{ij}^f(ns)>n\delta(t-s).
	\end{align*}
	This implies that at a time $t_1\in(s,t)$ with  $Q_{i}^f(nt_1)-Q_{j}^f(nt_1)\leq 0$ the queue $Q_i^f$ was served. This would mean that the optimization (\ref{optFun}) resulted in a positive $\zeta_{ij}^f$. This cannot arise in a condition where all $Q_{ij}^f$ are zero, since in that state, by definition, all $\zeta_{ij}^f$ are set to zero. Hence there exists  $k,l,m$ such that  $Q_{kl}^m>0$. If  $\zeta_{ij}^f$  is added to $\zeta_{kl}^m$, the value of (\ref{optFun}) would only increase, thus contradicting its optimality. It follows that $\dot{s}_{ij}^f=0$ whenever $q_{ij}^f= 0$, and hence, (\ref{cruxEqn}) is true. \\
	\indent Thus, $\dot L(q(t))<-\epsilon \sum_{i,f}w(q^f)q_i^f$, and hence, $L(q(t))>0$ whenever $q(t)\neq 0$. Fix $\delta_1<0.5$. Then, there exists $T\leq T_1=\frac{L(q(0))}{\epsilon\delta_1}+\delta_1$ such that $\sum_{i,f}q_i^f\leq\delta_1$. To see this, assume otherwise, that $\sum_{i,f}q_i^f(t)>\delta_1$ for $t\in[0,T_1]$. Now,
	\begin{align*}
	L(q(t))=L(q(0))+\int_0^{t}\dot{L}(q(\tau))d\tau.
	\end{align*}
	Since $q$ is Lipschitz, $\dot q$ will be bounded.  Using (\ref{InitCondtn}) and the fact that $q(t)$ will grow at most linearly in $t$, it can be shown that $L(q(0))$ is finite. Since $w(q^f)\geq 1$,
	\begin{align*}
	L(q(t))\leq L(q(0))-\epsilon\delta_1t,
	\end{align*}
	for $t\in[0,T_1]$, and by choosing $t=T_1$, we obtain $L(q(T_1))<0$, which is a contradiction. Hence, $\sum_{i,f}q_i^f(T)\leq \delta_1$. Since the fluid queue is a deterministic process following the trajectory defined by equations (\ref{convC})-(\ref{allocDeriv}), it follows that, almost surely,
\begin{align*}
\lim_{n\to\infty}\sup ||Y(n,T)||=2\sum_{i,j,f}q(T)\leq 2\delta_1<1.
\end{align*}
 From the definition of $Y$, we have that
\begin{align*}
||Y(n,T)||\leq 2[1+ \sum_{i,f}A_i^f(n,T)+T\sum_{i,j,f}\mu_{max}].
\end{align*}
Since $\mathbb{E}[\sum_{i,f}A_i^f(n,T)]= T(\sum_{i,f}\lambda_i^f)<\infty$, we can use the Dominated Convergence Theorem \cite{athreya2006measure} to see that Theorem \ref{StabLem} holds for $Y$ with $\alpha=1-2\delta_1$. The result follows.
\end{proof}
\section{Simulation Results}
For simulation we consider a fifteen node network with seven flows, with connectivity as depicted in Fig \ref{figsim}, over a unit area. We will be trying to provide mean delay QoS for three of these flows. The channel gains are Rayleigh distributed with parameters inversely proportional to the square of the distance between nodes, and the arrival distribution is Poisson. The flows are $F10: 7\to 9\to 10$, $F4: 7\to 8\to 2 \to 4$, $F11:1\to 2 \to 4\to 11$, $F13: 9\to 10\to 13$, $F12: 1\to 3\to 6\to 12$, $F15: 5\to 14\to 15$ and $F6: 5\to 3\to 6$. The constant $k_0=0.01$ in (\ref{ReviewTime}), and in the distributed optimization, the algorithm runs 15 cycles over the set of nodes with $\alpha=0.0001$ and the initial state is zero. The simulation runs for $10^5$ slots, with $a_1=0.2$ and $a_2=2$ in (\ref{logiF}). The arrival rates are 3.8 for $F6$, 3.74 for $F10$, and 2.5 for the others. These are chosen to take the queues to the edge of the stability region, where delays are larger, and the control of the algorithm in providing QoS will be more evident. The values are shown in Table \ref{table:Two}, with flows $F10$, $F11$ and $F6$ having mean delay requirements, which are translated to $\lambda \bar{Q}^f$ in the $w$ function. The delays  are rounded to the nearest integer.\\
\indent The first row represents delays of the flows when $w=1$ for all flows, i.e., no priority is given. In the other rows, the values in brackets are of the form (target delay, achieved delay). The flows seem to respond very well to the target, often coming much lower than what is desired, since the weights tend to push the queue lengths to below these threshold values. In all cases, the delays can be brought down to less than $50\%$ of their unweighed values. Another effect is that giving QoS to one flow does not adversely affect the delay of the other flows. In fact, it can substantially reduce the mean delays of the other flows as well. Since the algorithm uses backpressure values, this is not surprising, and the weight function can be thought of as fine tuning the delay behaviour of the network.
\vspace*{-6mm}
\begin{figure}
	\centering
	\begin{tikzpicture}[scale=0.8, transform shape]
	\node[draw,shape=circle, fill={rgb:orange,1;yellow,2;pink,5}, scale=0.6, transform shape] (v1) at (3.2,3.6) {$1$};
	\node[draw,shape=circle, fill={rgb:orange,1;yellow,2;pink,5}, scale=0.6, transform shape] (v2) at (2.4,2.0) {$2$};
	\node[draw,shape=circle, fill={rgb:orange,1;yellow,2;pink,5}, scale=0.6, transform shape] (v3) at (3.2,2) {$3$};
	\node[draw,shape=circle, fill={rgb:orange,1;yellow,2;pink,5}, scale=0.6, transform shape] (v4) at (2.4,1.2) {$4$};
	\node[draw,shape=circle, fill={rgb:orange,1;yellow,2;pink,5}, scale=0.6, transform shape] (v5) at (4.3,1.8) {$5$};
	\node[draw,shape=circle, fill={rgb:orange,1;yellow,2;pink,5}, scale=0.6, transform shape] (v6) at (3.44,1.0) {$6$};
	\node[draw,shape=circle, fill={rgb:orange,1;yellow,2;pink,5}, scale=0.6, transform shape] (v7) at (1.2,3.6) {$7$};
	\node[draw,shape=circle, fill={rgb:orange,1;yellow,2;pink,5}, scale=0.6, transform shape] (v8) at (1.6,2.8) {$8$};
	\node[draw,shape=circle, fill={rgb:orange,1;yellow,2;pink,5}, scale=0.6, transform shape] (v9) at (0.4,2.8) {$9$};
	\node[draw,shape=circle, fill={rgb:orange,1;yellow,2;pink,5}, scale=0.6, transform shape] (v10) at (0.04,1.4) {$10$};
	\node[draw,shape=circle, fill={rgb:orange,1;yellow,2;pink,5}, scale=0.6, transform shape] (v11) at (2.04,0.4) {$11$};
	\node[draw,shape=circle, fill={rgb:orange,1;yellow,2;pink,5}, scale=0.6, transform shape] (v12) at (2.8,0.6) {$12$};
	\node[draw,shape=circle, fill={rgb:orange,1;yellow,2;pink,5}, scale=0.6, transform shape] (v13) at (0.4,0.8) {$13$};
	\node[draw,shape=circle, fill={rgb:orange,1;yellow,2;pink,5}, scale=0.6, transform shape] (v14) at (4.3,0.8) {$14$};
	\node[draw,shape=circle, fill={rgb:orange,1;yellow,2;pink,5}, scale=0.6, transform shape] (v15) at (3.8,0.4) {$15$};	
	\draw[line width=0.2mm] (v1) -- (v2)
	(v1) -- (v5)
	(v3) -- (v5)
	(v4) -- (v2)
	(v11) -- (v4)
	(v11) -- (v12)
	(v10) -- (v13)
	(v9) -- (v10)
	(v7) -- (v9)
	(v7) -- (v8)
	(v8) -- (v2)
	(v3) -- (v6)
	(v6) -- (v12)
	(v5) -- (v14)
	(v5) -- (v6)
	(v9) -- (v4)
	(v3) -- (v4)
	(v14) -- (v15)
	(v12) -- (v15)
	(v13) -- (v11)
	(v3) -- (v1);	
	\end{tikzpicture}
	\caption{Sample Network}
	\label{figsim}
\end{figure}
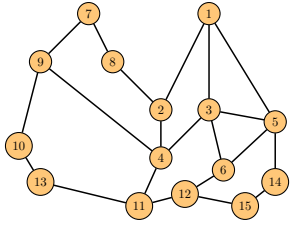
\begin{table}[h]
	\centering
	\caption{Simulation for example in Fig \ref{figsim}. Three Flows with mean delay requirements, network of fifteen nodes. Entries of the form (a,b) indicate delay target a, delay achieved b.}
	\label{table:Two}
	\begin{tabular}{|p{1.0cm}|p{0.3cm}|p{1.0cm}|p{0.3cm}|p{0.3cm}|p{0.3cm}|p{1cm}|}
		\hline
		\multicolumn{7}{|c|}{Mean Delay(slots) for each flow}\\		
		\hline 
		\multicolumn{1}{|c}{F10}\ & \multicolumn{1}{|c}{F4} & \multicolumn{1}{|c}{F11} & \multicolumn{1}{|c}{F13}\ & \multicolumn{1}{|c}{F12} & \multicolumn{1}{|c}{F15} & \multicolumn{1}{|c|}{F6} \\
		\hline
		318 & 68 & 499 & 233 & 642 & 25 & 111 \\
		\hline
		(200,188) & 61 & (350,304) & 163 & 403 & 23 & (70,70) \\
		\hline
		(150,96) & 60 & (300,265) & 85 & 362 & 22 & (60,65) \\
		\hline
		(150,67) & 56 & (150,148) & 61 & 235 & 22 & (45,55) \\
		\hline
		(200,136) & 56 & (130,134) & 119 & 220 & 22 & (50,55) \\
		\hline
	\end{tabular}
\end{table}
\vspace{-5mm}
\section{Conclusion}
We have developed a throughput optimal distributed algorithm to provide end-to-end mean delay requirements for flows in a wireless multihop network. The algorithm uses discrete review to solve an optimization problem at review instants, and uses a control policy based on the solution of an optimization problem. The optimization problem incorporates mean delay requirements as weights to the control variables. These weights vary dynamically, depending on the current state of the system. Iterative gradient ascent and distributed iterative projection are used to compute the optimal point in a distributed manner. We also demonstrate throughput optimality of the algorithm by a theoretical analysis. By means of simulations we demonstrate the effectiveness of the algorithm. Surprisingly, the algorithm  not only reduces the mean delays of the targeted flows, it reduces mean delays of other flows as well.
\bibliography{survey}
\bibliographystyle{IEEEtran}

\end{document}